
\documentclass[conference]{IEEEtran}
%
\IEEEoverridecommandlockouts                              

\usepackage{mathtools} 
\usepackage{algorithm}
\usepackage{algorithmic}

\usepackage[sort,compress]{cite}
\usepackage{amsfonts,dsfont,amssymb,bm}

\usepackage{subcaption}
\usepackage{amsthm}
\usepackage{microtype}

\usepackage{algorithm}
\usepackage{xcolor}

\usepackage{multirow}
\DeclarePairedDelimiter{\abs}{\lvert}{\rvert}

\usepackage{array,graphicx,verbatim,mathrsfs,bbm,mathtools}





\newtheorem{theorem}{Theorem}

\newtheorem{lemma}{Lemma}

\newtheorem{remark}{Remark}


\DeclareMathOperator{\ee}{\mathbb{E}}			
\DeclareMathOperator{\prob}{{\mathds{P}}}			


\usepackage{algorithm}

\usepackage[left=54pt, right=54pt, bottom=54pt, top=54pt]{geometry}

\usepackage{hyperref}

\makeatletter
\newcommand*\doTRANS[2]{\raisebox{\depth}{$\m@th#1\intercal$}}
\makeatother

\makeindex

\usepackage{xparse}
\NewDocumentCommand\AVG{s}
    {\IfBooleanTF#1%
      {\frac{1}{\abs{N  }} \sum_{i \in N  }}
      {\frac{1}{\abs{N^m}} \sum_{i \in N^m}}
    }

\usepackage{xcolor}

\pdfobjcompresslevel=0


\usepackage [english]{babel}
\usepackage [autostyle, english = american]{csquotes}
\MakeOuterQuote{"}


%

%

%
\ifCLASSINFOpdf
\else
\fi

\begin{document}
%
\title{Symmetric Strategies for Multi-Access IoT Network Optimization: A Common Information Approach}
\author{Sagar Sudhakara
\thanks{S. Sudhakara is with the Department of Electrical \& Computer
Engineering, University of Southern California, Los Angeles, CA 90089
(E-mail: sagarsud@usc.edu). }}


%


\maketitle

\begin{abstract}
In the context of IoT deployments, a multitude of devices concurrently require network access to transmit data over a shared communication channel. Employing symmetric strategies can effectively facilitate the collaborative use of the communication medium among these devices. By adopting such strategies, devices collectively optimize their transmission parameters, resulting in minimized collisions and enhanced overall network throughput.

Our primary focus centers on the formulation of symmetric (i.e., identical) strategies for the sensors, aiming to optimize a finite horizon team objective. The imposition of symmetric strategies introduces novel facets and complexities into the team problem. To address this, we embrace the common information approach and adapt it to accommodate the use of symmetric strategies. This adaptation yields a dynamic programming framework grounded in common information, wherein each step entails the minimization of a single function mapping from an agent's private information space to the space of probability distributions over possible actions.

Our proposed policy/method incurs a reduced cumulative cost compared to other methods employing symmetric strategies, a point substantiated by our simulation results.
\end{abstract}


%
\IEEEpeerreviewmaketitle

\section{Introduction}
\subsection{Motivation}

In today's rapidly expanding landscape of Internet of Things (IoT) deployments, the simultaneous operation of a multitude of devices is a commonplace scenario. These devices, ranging from sensors to actuators, play a pivotal role in collecting and disseminating data across interconnected IoT networks. However, as the number of IoT devices continues to grow, so does the contention for network access to transmit data over shared communication channels \cite{khan2012future}. In this context, addressing data collisions becomes a pressing concern.

Data collisions occur when two or more IoT devices attempt to transmit data over the same channel simultaneously, resulting in interference and loss of data. This not only leads to inefficiency in data transmission but also jeopardizes the reliability and responsiveness of the entire IoT network. The consequences of data collisions can be severe, particularly in applications where timely and accurate data exchange is critical, such as industrial automation \cite{misra2022industrial}, healthcare monitoring, smart cities, and precision agriculture.

One of the key challenges in mitigating data collisions within the IoT ecosystem is the need for devices to collaboratively share and manage the communication medium. Traditional communication protocols, while effective in point-to-point or small-scale deployments, may become less efficient as the number of devices grows \cite{gubbi2013internet}. To address this challenge, the adoption of symmetric strategies emerges as a promising solution.

Symmetric strategies involve devices collectively optimizing their transmission parameters in a coordinated manner. This optimization aims to reduce the likelihood of data collisions by ensuring that devices share common rules when accessing the channel. Such strategies, when effectively implemented, can significantly enhance overall network throughput and reduce the occurrence of data collisions.

In our paper, our primary focus revolves around the formulation and implementation of symmetric (i.e., identical) strategies tailored to IoT sensors. These strategies aim to optimize a finite horizon team objective. This concentration on symmetric strategies introduces novel dimensions and complexities into the problem, necessitating innovative approaches for resolution. To tackle these challenges, we turn to the common information approach, which enables us to adapt and accommodate the use of symmetric strategies effectively.

By leveraging the common information approach \cite{9992871}, we construct a dynamic programming framework rooted in common information. In this framework, each step involves the minimization of a function that maps from an agent's private information space to the space of probability distributions over possible actions. This approach provides a robust foundation for devising efficient and collision-avoidance strategies for IoT networks with multiple devices.

The culmination of our research is a proposed policy and methodology that has demonstrated its efficacy in practice. Through extensive simulation results, we have validated that our approach incurs a reduced cumulative cost compared to other methods employing symmetric strategies. This not only underscores the viability of our approach but also its potential to significantly enhance the efficiency and reliability of Multi-IoT networks.

Our paper endeavors to contribute meaningfully to the advancement of Multi-IoT networks by offering innovative solutions for mitigating data collisions and improving network efficiency. The implications of our work extend to a diverse array of applications within the IoT ecosystem, ultimately fostering enhanced data reliability and responsiveness in the era of interconnected devices.
\begin{figure}
      \centering
      \includegraphics[width = 0.40\textwidth]{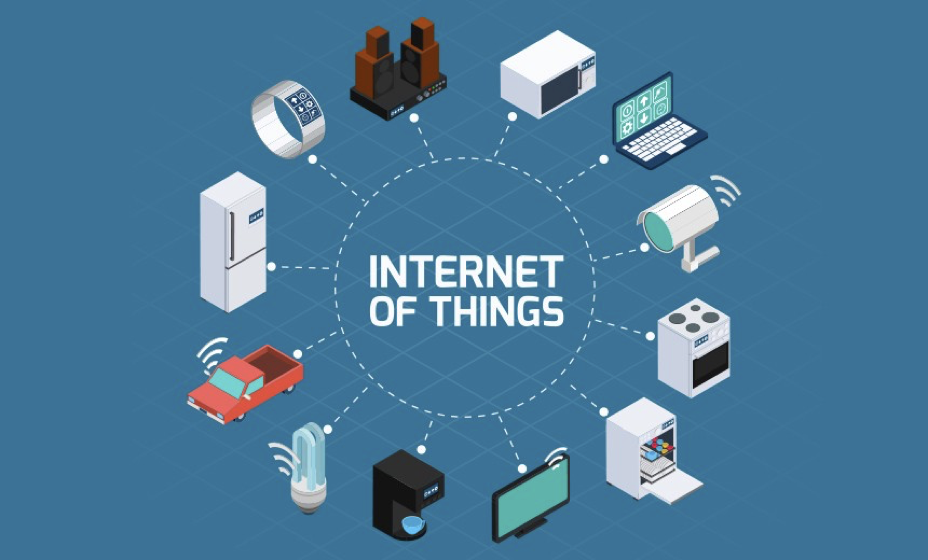}
     \caption{Mitigating Collisions in Multi-IoT Networks.}
\end{figure}
\subsection{Previous Work}
Addressing data collisions in multi-access, single-channel reuse conditions within Multi-IoT networks involves a diverse array of methods and tools. These encompass collision avoidance techniques such as CSMA/CA and listen-before-talk \cite{ieee2007ieee}, where devices actively monitor channel activity prior to transmitting. Time Division Multiple Access (TDMA) \cite{stallings2007data} allots specific time slots for device communication, reducing overlap. Frequency Hopping Spread Spectrum (FHSS) \cite{skylar2015digital} permits devices to switch between channels at designated intervals, minimizing interference. Further enhancements come from randomized backoff times and collision detection with retries \cite{comer2006internetworking}. The central IoT gateway or coordinator plays a pivotal role, managing channel access, scheduling transmissions, and prioritizing critical data via Quality of Service (QoS) mechanisms \cite{li2015internet}. These methods extend to interference mitigation, network synchronization, and machine learning \cite{alsheikh2014machine}, adjusting parameters adaptively for collision reduction. Additionally, channel sensing and dynamic channel allocation optimize IoT communication \cite{akyildiz2002survey}. Moreover, load balancing distributes devices across multiple channels, mitigating congestion. While these strategies are applicable in diverse contexts such as smart homes, industrial IoT, logistics, healthcare, and environmental monitoring, the unique focus on probabilistic symmetric strategies tailored to IoT sensors is notably absent in these approaches. These symmetric strategies, aimed at optimizing finite horizon team objectives, introduce innovative nuances into the problem. To address these challenges, the common information approach is employed, enabling the effective adaptation and utilization of symmetric strategies.
\section{Problem Formulation}\label{sec:problem_formulation}
We delve with a scenario where a group of $n$ IoT sensors operates in a slotted multiple access setting. Time is divided into discrete slots, and within each slot, the sensors can either transmit data or remain idle.
In each time slot, various events occur based on the sensors' activities. These events include:
\begin{itemize}
    \item Successes: If a sensor successfully transmits its data without interference, it is considered a success event. 
    \item Idle Periods: During some time slots, certain sensors may not have any data to transmit. In such cases, they remain idle, indicating no transmission activity.
    \item Collisions: When multiple sensors attempt to transmit data simultaneously within the same time slot, collisions may occur. 
\end{itemize}

Every sensor, denoted by agent $i$, possesses a distinct mode $M^i$ selected from the set $\mathcal{M}$, where $i$ spans the range $\{1, 2, 3, \dots, n\}$. The mode $M^i$ attributed to an agent is a time invariant random variable. The collective set of modes is denoted as $M$, encompassing $(M^1, M^2, \dots, M^n)$.

Agent mode $M^i$ is characterized by three possibilities: Aggressive ($Ag$), Passive ($Pa$), or a selection made by the designer ($De$). In the Aggressive mode, an agent exhibits a high transmission probability denoted as $\alpha$. Conversely, the Passive mode is characterized by a low transmission probability, represented by $\beta$.

The set of possible actions is formally defined as $\mathcal{U} = \{0, 1\}$, wherein the specific control action undertaken by sensor $i$ during time instance $t$ is symbolized by $U^i_t \in \mathcal{U}$. The entirety of control actions for the ensemble of multi IoT sensors is concisely captured as $U_t = (U^1_t, U^2_t, \dots, U^n_t)$.


 \textit{Information structure, randomization, and symmetric strategies:}

At the commencement of time $t$, each IoT sensor $i$, possesses access to its individual mode, its historical sequence of actions, and the action histories of all other sensors. Consequently, the entirety of information accessible to sensor $i$ at time $t$ is concisely represented as:

\begin{equation}
I^i_{t}=\{M^i,U_{1:t-1}\}.
\end{equation}

The information at the disposal of sensor $i$ ($i=1,2,\dots,n$) during time $t$ encompasses two distinct components:
\begin{enumerate}
    \item Common  information $C_t$ - This information is shared among all sensors\footnote{It's important to note that $C_t$ doesn't need to encapsulate the complete scope of information shared by all sensors; it solely excludes information that isn't universally available to all sensors.}. $C_t$ takes values in the set $\mathcal{C}_t$.
    \item Private information $P^i_t$ -Any information exclusively accessible to sensor $i$ at time $t$ that isn't encompassed within $C_t$ is encapsulated within $P^i_t$. The potential values of $P^i_t$ reside within $\mathcal{P}_t$. (Note that the space of private information is the same for all the sensors.) We use $P_t$ to denote  $(P^1_t, P^2_t \dots  P^n_t)$.
\end{enumerate}

$C_t$ should be viewed as an ordered list (or row vector) of some of the system variables that are known to all the sensors. Similarly, $P^i_t$ should be viewed as an ordered list (or row vector).
The common and private information available to sensor $i$ at time $t$  is given by 

\begin{align}
   &C_{t}=U_{1:t-1}; ~P^i_t = M^i.
\end{align}

Sensor $i$ harnesses the accessible information to formulate its transmission decision at time $t$. In the context of multiple sensors, it is a widely accepted practice to confine agents to deterministic strategies without compromising optimality \cite{yuksel2013stochastic}. Nonetheless, given that the sensors in our configuration are constrained to employ symmetric strategies, the introduction of randomization can be beneficial, as demonstrated in an example provided in \cite{sudhakara2023optimal}.
Sensor $i$ has the flexibility to introduce randomness when making its transmission choice. Specifically, at each time $t$, sensor $i$ employs its information $I^i_{t}$ to select a probability distribution $\delta{U}^i_t$ across the transmission decision space $\mathcal{U}$. This is mathematically expressed as:

\begin{equation}\label{commact} \delta{U}^i_t = g^i_t(I^i_{t}), \end{equation}
Here, $g^i_t$ stands as a mapping from $\mathcal{P}_t$ to $\Delta(\mathcal{U})$, where $\Delta(\mathcal{U})$ represents the set of probability distributions over $\mathcal{U}$. Consequently, the control action $U^i_t$ is chosen in a random manner based on the designated distribution, i.e., $U^i_t \sim \delta{U}^i_t$. The primary motivation underlying the use of randomized strategies is rooted in the constraint imposed on our problem to adhere to symmetric strategies, as illustrated in reference [2]. 
The function $g^i_t$ assumes the role of sensor $i$'s transmission strategy during time $t$. The collection of these functions, $g^i:= (g^i_1, \ldots, g^i_T)$, collectively defines sensor $i$'s overarching transmission strategy. The comprehensive set of all possible strategies for sensor $i$ is denoted as $\mathcal{G}$.

We employ the notation $(g^1,g^2, \cdots, g^n)$ to symbolize the strategies adopted by sensor 1, sensor 2, ..., sensor $n$ correspondingly. Our primary focus centers on the realm of \emph{symmetric strategies}, which implies that all sensors adopt an identical transmission strategy. In the context of symmetric strategies, the superscript $i$ in $g^i$ is omitted, and a symmetric strategy pair is succinctly represented as $(g,g,\cdots,g)$.


\subsection{Strategy optimization problem for Multi-IoT Sensors}
During each time instance $t$, the system incurs a cost $k_t(M^1,\dots,M^n,U^1_t\dots U^n_t)$, which hinges on the modes and transmission actions of all sensors. The system operates across a time horizon of duration $T$. The cumulative expected cost associated with a strategy pair $(g^1,g^2,\cdots,g^n)$ is expressed as:

\begin{equation}\label{eq:cost2} J(g^1,g^2,\cdots,g^n):=\mathbb{E}^{(g^1,g^2,\cdots,g^n)}\left[\sum_{t=1}^{T}k_t(M,{U}_t)\right]. \end{equation}
In this context, the system cost at time $t$ is represented by the expression:

\begin{equation}
    k_t(M,{U}_t)=
    \begin{cases}
    0, ~ \text{if only one of} ~ u^i_t=1\\
    1, ~ \text{otherwise} .
  \end{cases} \label{strategy2}
\end{equation}

The cost function indicates a successful transmission when solely one sensor transmits at a given time $t$.

The primary objective is to identify a symmetric strategy pair that minimizes the overall expected cost among all possible symmetric strategy pairs. In other words, the goal is to find a strategy $g \in \mathcal{G}$ such that:
\begin{equation}\label{eq:optimalg}
    J(g,g,\cdots,g) \leq J(h,h, \cdots,h),~ ~\forall h \in \mathcal{G}.
\end{equation}

\begin{remark}
Our premise assumes that the randomization carried out by each sensor remains independent over time and uncorrelated with the other sensor's actions \cite{9992871}. This is achieved by each sensor having access to independent and identically distributed (i.i.d.) random variables $K^i_{1:T}$ uniformly distributed over the interval $(0, 1]$. These variables, $K^1_{1:T}$ and $K^2_{1:T}$, are mutually independent as well as independent from the primitive random variables. Moreover, each sensor leverages a mechanism $\kappa$ that combines $K^i_t$ and the distribution $\delta{U}^i_t$ over $\mathcal{U}$ to generate a random action according to the specified distribution. Thus, sensor $i$'s action at time $t$ is formulated as $U^i_t = \kappa(\delta{U}^i_t, K^i_t)$.
\end{remark}

\begin{remark}
It can be shown that the strategy space $\mathcal{G}$ is a compact space and that $J(g,g,\cdots,g)$ is a continuous function of $g \in \mathcal{G}$. Thus, an optimal $g$ satisfying \eqref{eq:optimalg} exists.
\end{remark}


\section{Common information approach}\label{sec:CI_approach}

In the context of the Multi IoT Sensor Problem, we embrace the common information approach introduced in \cite{nayyar2013decentralized}. 
 This methodology reframes the decision-making challenge from the standpoint of a coordinator endowed with knowledge of the common information. At each time step, the coordinator formulates prescriptions that translate each sensor's private information into its corresponding transmission action. The behavioral action of each sensor within this framework materializes as the prescription evaluated using the current realization of its private information. Notably, since the problem mandates symmetric strategies for all sensors, we demand that the coordinator choose \emph{identical prescriptions for every sensor}. To provide a precise description, let $\mathcal{B}_t$ designate the space encompassing all functions from $\mathcal{P}_t$ to $\Delta(\mathcal{U})$. Within this context, $\Gamma_t \in \mathcal{B}_t$ signifies the prescription elected by the coordinator at time $t$. Consequently, the behavioral action of sensor $i$ takes the form: $\delta U^i_t = \Gamma_t(P^i_t)$.

Aligning with the problem's requirements, sensor $i$'s action $U^i_t$ manifests through independent randomization governed by the distribution $\delta U^i_t$. The coordinator's prescription selection at time $t$ hinges on the common information available at that juncture and the history of prior prescriptions. Hence, the prescription is denoted as:

\begin{equation}
    \Gamma_t = d_t(C_t, \Gamma_{1:t-1}),
\end{equation}

Where $d_t$ constitutes a mapping from $\mathcal{C}_t \times \mathcal{B}_1 \ldots \times \mathcal{B}_{t-1}$ to $\mathcal{B}_t$. The ensemble of mappings $d:= (d_1,\ldots,d_T)$ is acknowledged as the coordination strategy. The coordinator's objective revolves around selecting a coordination strategy that effectively minimizes the total expected cost across the finite time horizon:

\begin{equation}
    \mathcal{J}(d) := \ee^{d}\left[\sum_{t=1}^{T}k_t(M,{U}_t)\right].
\end{equation}

The subsequent lemma serves to establish the equivalence between the coordinator's reformulation and the original problem. The integration of identical prescriptions by the coordinator acts as the linchpin for connecting the coordinator's strategy to symmetric strategies for the agents in the original problem.

\begin{lemma}
The original Multi IoT Sensor Problem and the reformulated coordinator's problem are equivalent in the ensuing sense:

(i) For any symmetric strategy pair $(g,g)$, consider the subsequent coordination strategy:
\[ d_t(C_t) = g_t(\cdot, C_t).\]
In this scenario, $J(g,g) = \mathcal{J}(d)$.

(ii) Conversely, for any coordination strategy $d$, define the symmetric strategy pair as follows:
\[ g_t(\cdot, C_t) = d_t(C_t, \Gamma_{1:t-1} ),\]
where $\Gamma_k = d_k(C_k, \Gamma_{1:k-1})$ for $k=1,\ldots, t-1$.
\end{lemma}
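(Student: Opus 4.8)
The plan is to prove the two directions of the equivalence by a coupling argument: construct a common probability space on which the primitive random variables (the modes $M$ and the randomization seeds $K^i_{1:T}$) are shared between the original problem under $(g,g,\dots,g)$ and the coordinator's problem under the associated $d$, and then argue by induction on $t$ that the resulting stochastic processes — in particular the actions $U_t$ and hence the per-stage costs $k_t(M,U_t)$ — are identical sample-path-by-sample-path. Since the costs agree pathwise, their expectations agree, which is exactly $J(g,g,\dots,g)=\mathcal{J}(d)$ in part (i) and the converse in part (ii).

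For part (i), given a symmetric strategy $g$, I would define $d_t$ recursively by $d_t(C_t,\Gamma_{1:t-1}) := g_t(\cdot,C_t)$ — note the prescription only depends on $C_t$ here because $g_t$ takes the form $g_t(P^i_t,C_t)$ with $C_t = U_{1:t-1}$ already encoding the past. The induction hypothesis at time $t$ is that the common information $C_t$ generated in the coordinator's problem equals the common information $U_{1:t-1}$ generated in the original problem. Under this hypothesis the prescription $\Gamma_t = d_t(C_t) = g_t(\cdot,C_t)$ is the same function that sensor $i$ would use, so $\delta U^i_t = \Gamma_t(P^i_t) = g_t(M^i,C_t)$ agrees in both problems; feeding the same seed $K^i_t$ through the same mechanism $\kappa$ yields the same $U^i_t$, hence the same $U_t$, which updates $C_{t+1}$ identically and closes the induction. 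For part (ii) the argument is symmetric: given $d$, define $g_t(\cdot,C_t) := d_t(C_t,\Gamma_{1:t-1})$ with $\Gamma_{1:t-1}$ obtained by running the coordinator's recursion on $C_{1:t-1}$; since $C_t$ is a deterministic function of the past actions $U_{1:t-1}$, this $g_t$ is a well-defined map from $\mathcal{P}_t$ to $\Delta(\mathcal{U})$, and the same pathwise induction shows the two problems generate identical trajectories and therefore identical expected costs.

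The main obstacle — really the only subtle point — is verifying that the objects being defined are genuinely admissible strategies of the required form, i.e. that the superscript-free, symmetric structure is preserved. In part (i) one must check that $d_t$ as defined depends only on $(C_t,\Gamma_{1:t-1})$ and selects a single prescription applied to every sensor, which is immediate because $g$ is itself symmetric (superscript-independent). In part (ii) one must check that the recovered $g_t$ depends on sensor $i$ only through its private information $P^i_t = M^i$ and not on $i$ itself: this holds because $d_t$ produces one prescription $\Gamma_t\in\mathcal{B}_t$ common to all sensors, so $g_t(\cdot,C_t)=\Gamma_t$ has no residual dependence on the agent index. One should also confirm that the past prescriptions $\Gamma_{1:t-1}$ appearing in the definition of $g_t$ are themselves deterministic functions of $C_t$ (since each $\Gamma_k=d_k(C_k,\Gamma_{1:k-1})$ unwinds to a function of $C_k\subseteq C_t$), so that $g_t$ is a legitimate function of $(P^i_t,C_t)$ alone. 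Once these bookkeeping facts are in place, the coupling/induction delivers the equality of costs with no further computation.
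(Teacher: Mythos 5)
Your proposal is correct and is essentially the argument the paper relies on: the paper's proof simply invokes Proposition 3 of the cited common-information reference together with the observation that identical prescriptions correspond to symmetric strategies, and your coupling-plus-induction construction is precisely the strategy-correspondence argument underlying that proposition, written out in full (including the two admissibility checks the paper leaves implicit). No gaps; your version is just a self-contained expansion of the citation.
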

\begin{proof}
    The proof hinges on Proposition 3 of \cite{nayyar2013decentralized}, coupled with the realization that the use of identical prescriptions by the coordinator corresponds to the adoption of symmetric strategies within the original problem.

In this manner, the common information approach bridges the gap between the original Multi IoT Sensor Problem and the perspective of a coordinator employing prescriptions based on the shared information to optimize the sensors' strategies.
\end{proof}



We now proceed with finding a solution for the coordinator's problem. As shown in \cite{nayyar2013decentralized}, the coordinator's belief on $P_t$ can serve as its information state  (sufficient statistic) for selecting prescriptions. At time $t$, the coordinator's belief is given as:
\begin{align}\label{coord:prob1_belief}
    &\Pi_t(p)=\prob(P_t =p|C_t, \Gamma_{1:(t-1)}),
\end{align}
for all $ p \in \mathcal{P}_t \times \mathcal{P}_t \dots \times \mathcal{P}_t $. The belief can be sequentially updated by the coordinator as described in  Lemma \ref{lemma:belief_update} below. The lemma follows from arguments similar to those in Theorem 1 of \cite{nayyar2013decentralized}.

\begin{lemma} \label{lemma:belief_update}
For any coordination strategy $d$, the coordinator's belief $\Pi_t$ evolves almost surely as
\begin{equation}
    \Pi_{t+1} = \eta_t(\Pi_t, \Gamma_t, U_{t}),
\end{equation}
where $\eta_t$ is a fixed transformation that does not depend on the coordination strategy. 
\end{lemma}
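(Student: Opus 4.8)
The plan is to establish the belief update rule \eqref{coord:prob1_belief} by directly mimicking the argument behind Theorem 1 of \cite{nayyar2013decentralized}, adapted to the symmetric-prescription setting. First I would recall the two structural facts that drive the recursion: (a) the private information has the trivial dynamics $P^i_{t+1} = M^i = P^i_t$ in this particular problem (since $C_t = U_{1:t-1}$ absorbs all new information and the modes are time-invariant), and (b) the new common information revealed between time $t$ and $t+1$ is exactly $U_t = (U^1_t,\dots,U^n_t)$, while each $U^i_t$ is generated by independent randomization through the \emph{common} prescription $\Gamma_t$ evaluated at $P^i_t$, i.e. $U^i_t = \kappa(\Gamma_t(P^i_t), K^i_t)$ with $K^i_t$ i.i.d.\ and independent of everything else. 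Thus $\Pi_{t+1}$ is the conditional law of $P_{t+1}$ given $(C_t, U_t, \Gamma_{1:t})$, and the whole content is a Bayes-rule computation using $\Pi_t$ as the prior.

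Next I would carry out that Bayes update explicitly. Starting from $\Pi_t(p) = \prob(P_t = p \mid C_t, \Gamma_{1:t-1})$, I would note that $\Gamma_t = d_t(C_t, \Gamma_{1:t-1})$ is a deterministic function of the conditioning variables, so conditioning additionally on $\Gamma_t$ changes nothing. Then, for each candidate realization $p = (p^1,\dots,p^n)$ and observed $U_t = u_t = (u^1_t,\dots,u^n_t)$, the likelihood of observing $u_t$ given $P_t = p$ and the prescription $\Gamma_t$ factorizes across agents because the randomization variables $K^i_t$ are mutually independent: it equals $\prod_{i=1}^{n} \Gamma_t(p^i)(u^i_t)$. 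Applying Bayes' rule gives
\begin{equation}
    \Pi_{t+1}(p) = \frac{\Pi_t(p)\,\prod_{i=1}^{n}\Gamma_t(p^i)(u^i_t)}{\sum_{p'} \Pi_t(p')\,\prod_{i=1}^{n}\Gamma_t(p'^i)(u^i_t)},
\end{equation}
and since $P_{t+1} = P_t$ here, this is precisely $\Pi_{t+1}$. Packaging the right-hand side as a function $\eta_t(\Pi_t, \Gamma_t, U_t)$ — defined on the event that the denominator is nonzero, which holds almost surely since $U_t$ is drawn from exactly that distribution — completes the recursion, and one checks $\eta_t$ depends only on $(\Pi_t,\Gamma_t,U_t)$ and not on $d$ or on the raw common information. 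For the general (non-trivial private dynamics) version one would insert the private-information transition kernel as an extra factor before renormalizing, but in the present model that kernel is the identity.

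The main obstacle, such as it is, is bookkeeping rather than mathematics: I must verify that $U_{1:t}$ together with $\Gamma_{1:t}$ genuinely captures all information in $C_{t+1}$ relevant to $P_{t+1}$, i.e. that no agent's action depends on another agent's private information except through the shared prescription — this is where the \emph{identical-prescription} restriction matters and where one invokes the structural results of \cite{nayyar2013decentralized}. A second minor point is the almost-sure qualifier: the update is only defined where the normalizing constant is positive, so I would argue that the set of $U_t$ values with zero probability under the true dynamics is exactly the set on which the denominator vanishes, making the "evolves almost surely" statement precise. Beyond these, the proof is the routine conditional-independence-plus-Bayes argument sketched above, and I would simply cite Theorem 1 of \cite{nayyar2013decentralized} for the parts that transfer verbatim.
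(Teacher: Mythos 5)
Your proposal is correct and matches the paper's own proof, which is exactly the same Bayes-rule computation: condition on the newly revealed common information $U_t$, factor the likelihood across agents as $\prod_{i=1}^{n}\gamma_t(m^i,u_{1:t-1};u^i_t)$ using the independence of the randomization variables, multiply by the prior $\Pi_t$, and normalize. Your treatment is in fact slightly more careful than the paper's (the almost-sure qualifier on the nonvanishing denominator and the observation that conditioning on $\Gamma_t$ is redundant), but the route is identical.
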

\begin{proof}
We know that the coordinator's belief at time $t$, is given by:
\begin{align}\label{coord:prob1_belief}
    &\Pi_t(m)=\prob(P_t =m|U_{1:t-1}, \Gamma_{1:(t-1)}),
\end{align}
$\forall ~$m$ \in \{Ag, Pa, De\}$. At time $t+1$,
\begin{align}\label{coord:prob1_belief}
    &\Pi_{t+1}(m):=\prob(P_{t+1}={m}|U_{1:t}, \Gamma_{1:t})\notag\\
    &=\frac{\prob(m,u_t|u_{1:t-1}, \gamma_{1:t})}{\sum_{\tilde{m}}\prob(\tilde{m},{u}_t|u_{1:t-1}, \gamma_{1:t})}\notag\\
    &=\frac{\Pi_{i=1}^n \gamma_t(m^i,u_{1:t-1};u^i_t)\Pi_t(m)}{\sum_{\tilde{m}} \Pi_{i=1}^n \gamma_t(\tilde{m}^i,{u}_{1:t-1};{u}^i_t)\Pi_t(\tilde{m})}
\end{align}
\end{proof}
Using  the results in \cite{nayyar2013decentralized}, we can write a dynamic program for the coordinator's  problem. Recall that  $\mathcal{B}_t$ is the space of all functions from $\mathcal{P}_t$ to $\Delta(\mathcal{U})$. For a $\gamma \in \mathcal{B}_t$ and $p \in \mathcal{P}_t$, $\gamma(p)$ is a probability distribution on $\mathcal{U}$. Let $\gamma(p;u)$ denote the probability assigned to $u \in \mathcal{U}$ under the probability distribution $\gamma(p)$.

\begin{theorem}\label{thm:dp}
The value functions  for the coordinator's dynamic program are as follows: Define $V_{T+1}(\pi_{T+1}) =0$ for every $\pi_{T+1}$. For $t \leq T$ and  for any realization $\pi_t$ of $\Pi_t$, define
\begin{multline}
    V_t(\pi_t) = \min_{\gamma_t \in \mathcal{B}_t} \mathbb{E}[k_t(M, U_t) + \\
    V_{t+1}(\eta_t(\pi_t, \gamma_t, U_{t})) | \Pi_t = \pi_t, \Gamma_t = \gamma_t]\label{eq:DP1_new}
\end{multline}
Expanding on equation \eqref{eq:DP1_new},
\begin{equation}
\begin{split}
    \min_{\gamma_t \in \mathcal{B}_t} \Biggl[ &\sum_{m} \sum_{u_{1:t}} k_t(m, u_t)\pi_t(m)\Pi_{i=1}^n\gamma_t(m^i,u_{1:t-1};u^i_t) \\
    &+V_{t+1}(\eta_t(\pi_t, \gamma_t, u_{t}))\pi_t(m)\Pi_{i=1}^n\gamma_t(m^i,u_{1:t-1};u^i_t) \Biggr]
\end{split}
\end{equation}




 The coordinator's optimal strategy is to pick the minimizing prescription  for each time and each $\pi_t$.
\end{theorem}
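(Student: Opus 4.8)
The plan is to show that, thanks to the common-information reformulation already set up, the coordinator's problem is an ordinary finite-horizon Markov decision process whose state is the belief $\Pi_t$ and whose action is the prescription $\Gamma_t \in \mathcal{B}_t$, and then to invoke the classical backward-induction (Bellman) argument. I would carry this out in three stages.

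First, I would argue that $\Pi_t$ is an information state for the coordinator, i.e.\ a sufficient statistic for choosing $\Gamma_t$. This is the structural fact underlying \cite{nayyar2013decentralized}: the coordinator's data $(C_t,\Gamma_{1:t-1})$ affects the future only through the conditional law of the yet-unrevealed private information $P_t$, and that law is precisely $\Pi_t$; hence any coordination strategy can be replaced, without increasing cost, by one of the form $\Gamma_t=\tilde d_t(\Pi_t)$. The symmetry requirement is compatible with this reduction because it constrains the \emph{range} $\mathcal{B}_t$ of admissible prescriptions (identical across sensors), not the manner in which a prescription is selected from the common information.

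Second, I would verify the controlled-Markov property of the pair $(\Pi_t,\Gamma_t)$. Lemma~\ref{lemma:belief_update} already supplies $\Pi_{t+1}=\eta_t(\Pi_t,\Gamma_t,U_t)$ with $\eta_t$ independent of the strategy, and conditioned on $\{\Pi_t=\pi_t,\Gamma_t=\gamma_t\}$ the joint law of $(M,U_t)$ equals $\pi_t(m)\prod_{i=1}^n\gamma_t(m^i;u^i_t)$, a function of $(\pi_t,\gamma_t)$ alone. Consequently both the conditional distribution of $\Pi_{t+1}$ and the conditional expected stage cost $\mathbb{E}[k_t(M,U_t)\mid \Pi_t=\pi_t,\Gamma_t=\gamma_t]$ depend only on $(\pi_t,\gamma_t)$, so the coordinator indeed faces an MDP. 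Then, with this structure, the stated recursion is just Bellman's principle of optimality: set $V_{T+1}\equiv 0$, and backward in $t$ let $V_t(\pi_t)$ be the minimum over $\gamma_t\in\mathcal{B}_t$ of the expected current cost plus $V_{t+1}$ at the random next belief; a routine induction shows $V_t(\pi_t)$ is the optimal cost-to-go from time $t$ given $\Pi_t=\pi_t$, that any strategy choosing a minimizing $\gamma_t$ at every $(t,\pi_t)$ attains it, and that substituting the law of $(M,U_t)$ gives the displayed expanded form.

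The main obstacle is not the induction but the two rigor points that support it: (i) existence of a minimizing prescription $\gamma_t\in\mathcal{B}_t$ at each $\pi_t$, together with a measurable selection of it in $\pi_t$ — this needs compactness of $\mathcal{B}_t$ and continuity of the inner objective, the same ingredients invoked in Remark~2 for existence of an optimal $g$ — and (ii) the justification of the first-stage reduction, namely that retaining more than $\Pi_t$ in the coordinator's memory cannot lower the cost, which is exactly where the common-information machinery does the real work. Both are obtained by transcribing the corresponding statements of \cite{nayyar2013decentralized} to the present symmetric (identical-prescription) setting.
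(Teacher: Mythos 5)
Your proposal is correct and follows essentially the same route as the paper, which simply invokes the results of \cite{nayyar2013decentralized}: reduce the coordinator's problem to a belief-state MDP with prescriptions as actions (using Lemma~\ref{lemma:belief_update} for the strategy-independent belief update) and apply standard backward induction. Your write-up merely makes explicit the controlled-Markov verification and the existence/measurable-selection caveats that the paper leaves implicit.
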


Let $\Xi_t(\pi_t)$  be a minimizer of the value function in \eqref{eq:DP1_new}. Using $\Xi_t(\pi_t)$ obtained from the dynamic program in Theorem \ref{thm:dp}, we can construct a symmetric strategy pair $({g}^{*},\cdots,{g}^{*})$ as described in Algorithm \ref{alg:example}. This strategy pair minimizes the overall expected cost among all possible symmetric strategy pairs.
\begin{algorithm}[h]
  \caption{Strategy ${g}^{*}$ for IoT Sensor $i$}
  \label{alg:example}
\begin{algorithmic}
    \STATE Input: $\Xi_t(\cdot)$ obtained from DP for all $t$ 
  \FOR{$t=1$ {\bfseries to} $T$}
  
  \STATE Current information: $C_t,M^{i}$
  \STATE Update CIB $\Pi_{t+1} = \eta_t(\Pi_{t}, \Xi_{t}(\Pi_t),U_t)$ 
  \COMMENT{If $t=1$, Initialize CIB $\Pi_t$ using $C_1$}
  \STATE Get prescription $\Gamma_t = \Xi_t(\Pi_t)$ 
  \STATE Select communication action $U_t^{i} = \Gamma_t(M^{i}) $
  \ENDFOR
\end{algorithmic}
\end{algorithm}

We emphasize that the common information based dynamic program is solved offline. Therefore, the solution $\Xi_t$ to this dynamic program is known to both agents before the agents start operating in their environment. During their operation, this solution is used in a decentralized manner by the agents to select their actions as described in Algorithm \ref{alg:example}.

\section{Experiments}
For the sake of simplicity and to facilitate a clearer grasp of the design of symmetric strategies, we have constructed our experiment using only two sensors. This deliberate reduction in complexity allows us to provide a clearer insight into strategy designing when sensors are in different modes, such as Aggressive, Passive, or Designer mode.

\textbf{Problem Setup} Consider a scenario involving two sensors. Each sensor is capable of operating in one of three modes: Aggressive (Ag), Passive (Pa), or Designer (De). In Aggressive mode, a sensor has a high transmission probability ($\alpha$) near 1. In Passive mode, the transmission probability ($\beta$) is low, almost approaching 0. Our objective is to devise symmetric strategies for the sensors, accounting for the fact that one sensor's mode is private to the other and vice versa. The team incurs a cost of 0 for successful transmissions and a cost of 1 for collisions or idle channel situations.

\textbf{Implementation} 
To recap, let's revisit the concept of $\gamma(p)$, which denotes a continuous probability distribution on $\mathcal{U}$. In order to work with this continuous distribution in a discrete manner, we employ quantization. Specifically, we discretize the probability distribution into 21 distinct probability values, each separated by 0.05. This quantization process is crucial to enable our analysis within a discrete framework. The time horizon, labeled as $T$, is set to 10.

The procedure commences with an initial belief matrix $\textbf{\textit{B}}$ at time $t=1$, wherein the rows indicate the mode of agent $1$ and columns indicate the mode of agent $2$. By utilizing equation (21), we progressively update the belief matrix for each subsequent time step $t$. Following the update of the belief matrices, our emphasis shifts towards identifying the unique matrices among all these beliefs. This strategic step prevents redundant recalculations of values, ultimately boosting computational efficiency and streamlining our analysis process.

We solve the Coordinator's Dynamic program given in equation (13) and (14), store the minimum value for every belief ($\pi_t$) that can be achievable at time $t$ from the updated belief matrices, and back-propagate in time to evaluate the Value of all the beliefs reachable at time $t-1$.

The value obtained at time $t=0$ is the cumulative expected cost associated with the symmetric strategy. By executing these steps meticulously, we successfully transform theoretical concepts into practical computations, allowing us to uncover valuable insights and deliver conclusive results.\\ 

\textbf{Results 1:} 
We initialize the belief matrix indicated as $\textbf{\textit{B}}_1$ in Figure \ref{fig:scatter_plot1} with specific conditions: agent $1$ is assumed to be in the designer mode, while agent $2$ can be in either Aggressive or Passive mode, each with a probability of 0.5. (Here, $De^i$ represents agent $i$ being in Designer mode.)

\[
\textbf{\textit{B}}_1 = 
\begin{array}{c|ccc}
    & \text{$De^2$} & \text{$Ag^2$} & \text{$Pa^2$} \\
\hline
\text{$De^1$} & 0 & 0.5 & 0.5 \\
\text{$Ag^1$} & 0 & 0 & 0 \\
\text{$Pa^1$} & 0 & 0 & 0
\end{array}
\]

We observe scatter plots of value versus belief that depict the introduction of new observations (actions). As new observations are integrated, updated belief matrices emerge, each associated with a probability of taking action $U^1_t=1$ for every belief, as demonstrated in the Figure \ref{fig:scatter_plot1} below. The belief matrices indicated in Figure \ref{fig:scatter_plot1} are:

\vspace{3mm}
\begin{tabular}{c|c|c}
  \text{$\textbf{\textit{B}}_1$} & \text{$\textbf{\textit{B}}_{2}$} & \text{$\textbf{\textit{B}}_{3}$} \\
\hline
  $\begin{bmatrix}
    0 & 0.5 & 0.5 \\
    0 & 0 & 0\\
    0 & 0 & 0
  \end{bmatrix}$ &
  $\begin{bmatrix}
    0 & 1 & 0 \\
    0 & 0 & 0\\
    0 & 0 & 0
  \end{bmatrix}$ &
  $\begin{bmatrix}
    0 & 0 & 1 \\
    0 & 0 & 0\\
    0 & 0 & 0
  \end{bmatrix}$ \\
\end{tabular}

\begin{figure}[htb!]
    \centering
    \includegraphics[width=0.50\textwidth]{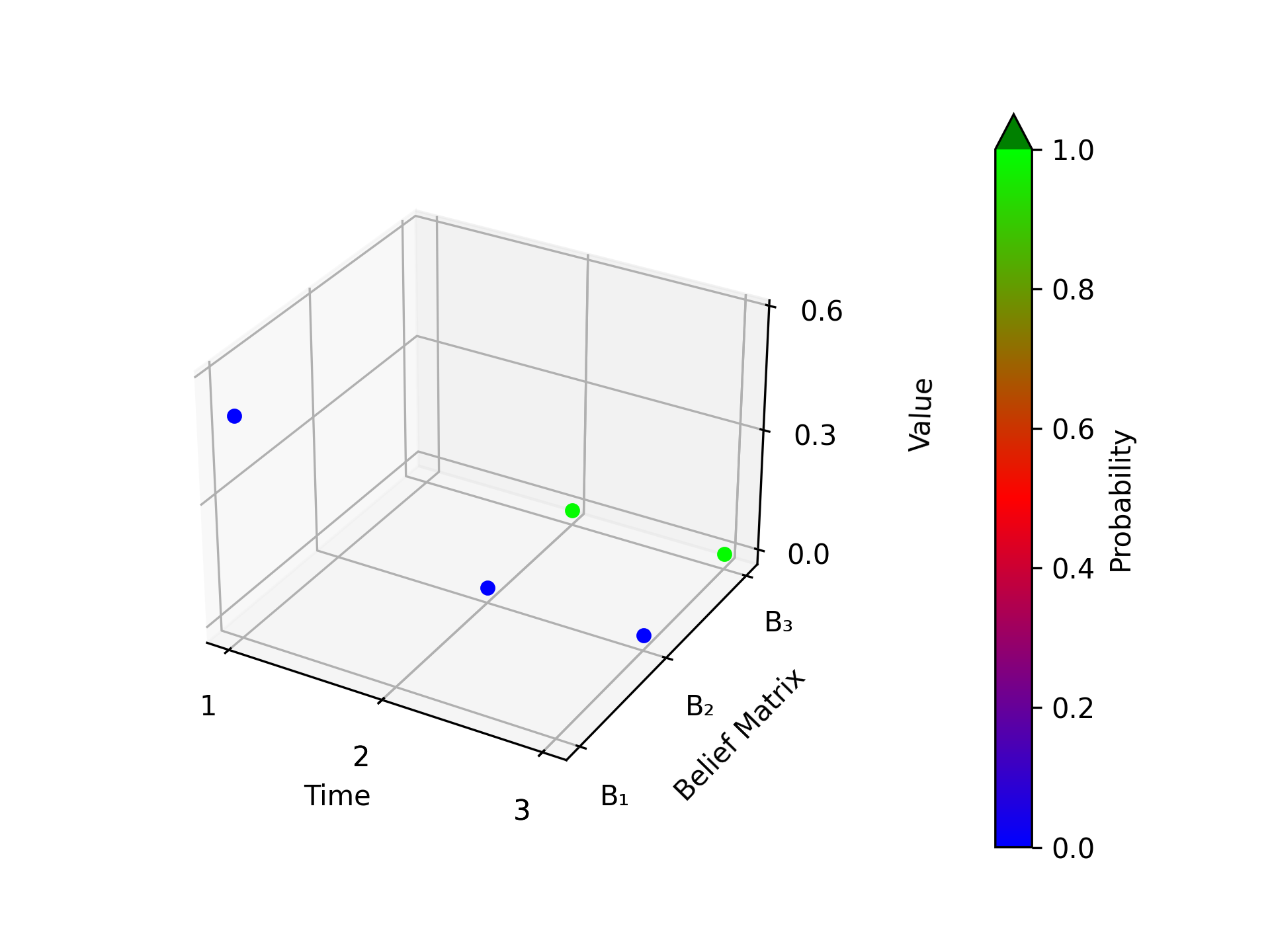} 
    \caption{Scatter Plot of Value vs. Belief}
    \label{fig:scatter_plot1}
\end{figure}
\vspace{3mm}
Beyond time $t=2$, the updated matrices tend to stabilize, enabling sensor $1$ to develop a more refined understanding of sensor 2's mode. This knowledge equips agent $1$ to strategically choose actions that minimize the system cost.\\

\textbf{Results 2:} We present a comprehensive comparative study between our symmetric strategic belief approach and alternative belief-based transmission strategies. This comparison is visually illustrated through a bar plot, effectively capturing the cumulative expected costs.

The alternative belief-based strategies are characterized by parameters represented as $(x,y,z)$. Under the premise that sensor 1 operates in designer mode, it acquires insight into the mode distribution of the other sensor. If the likelihood of the other sensor being in aggressive mode surpasses the threshold $x$ where $(\alpha\leq x \leq 1)$, the designer mode triggers action $(U^1_t)=0$. Similarly, when the probability of the other sensor being in passive mode exceeds the threshold $y$ where $(1-\beta \leq y \leq 1)$, the designer mode opts for action $(U^1_t)=1$. In cases where neither of these conditions holds true, the designer mode of sensor 1 selects action $(U^1_t)=1$ with a probability of $z$ $(\beta \leq z \leq \alpha)$ recall where $\alpha$ is high probability when the sensor is in aggressive mode and $\beta$ is the lowest probability of transmission when the sensor is in passive mode.

\begin{figure*}[h]
    \centering
    \includegraphics[width=0.80\textwidth]{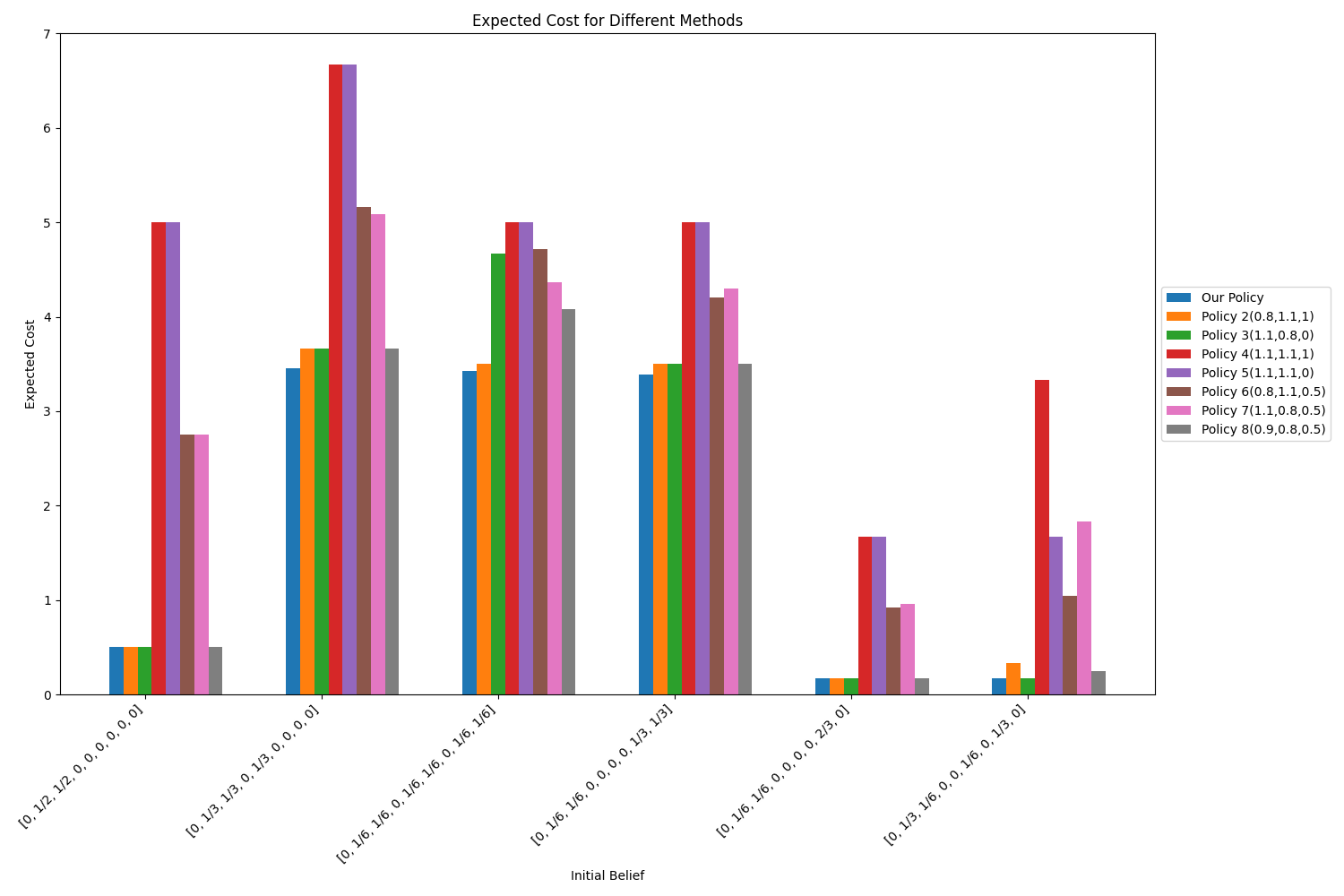} 
    \caption{Scatter Plot of Value vs. Belief}
    \label{fig:scatter_plot}
\end{figure*}
As shown in Table 1, our analysis encompasses diverse initial belief matrices, visually represented as vectors along the x-axis of the bar plot. By scrutinizing the expected costs associated with various belief-based methods characterized by the parameters $(x,y,z)$—as indicated in the table below policy methods—we uncover valuable insights into the effectiveness of our proposed symmetric strategic belief approach.

Remarkably, the table unequivocally displays the superiority of our method in achieving the lowest expected cost across a spectrum of initial beliefs compared to other belief-based methods. The least effective among the belief-based methods consistently adopt either an aggressive or passive stance, as exemplified by methods 4 and 5. The simulation code for these analyses is readily accessible through the following GitHub link \href{https://github.com/Sagarsud93/IoT-Symmetric-Strategy/blob/main/Copy_of_Policy_class_abc.ipynb}.

\begin{table*}[ht]
\centering
\begin{tabular}{|c|c|c|c|c|c|c|c|c|c|}
\hline
\multirow{2}{*}{Initial Belief} & \multirow{2}{*}{Proposed Policy} & \multirow{2}{*}{Policy 2} & \multirow{2}{*}{Policy 3} & \multirow{2}{*}{Policy 4} & \multirow{2}{*}{Policy 5} & \multirow{2}{*}{Policy 6} & \multirow{2}{*}{Policy 7} & \multirow{2}{*}{Policy 8} \\[1ex]
& & (0.8,1.1,1) & (1.1,0.8,0) & (1.1,1.1,1) & (1.1,1.1,0) & (0.8,1.1,0.5) & (1.1,0.8,0.5) & (0.9,0.8,0.5) \\
\hline
$[0, \frac{1}{2}, \frac{1}{2}, 0, 0, 0, 0, 0, 0]$ & 0.5 & 0.5 & 0.5 & 5 & 5 & 2.75 & 2.75 & 0.5 \\[1ex]
\hline
$[0, \frac{1}{3}, \frac{1}{3}, 0, \frac{1}{3}, 0, 0, 0, 0]$ & 3.457 & 3.667 & 3.667 & 6.667 & 6.667 & 5.167 & 5.082 & 3.667 \\[1ex]
\hline
$[0, \frac{1}{6}, \frac{1}{6}, 0, \frac{1}{6}, \frac{1}{6}, 0, \frac{1}{6}, \frac{1}{6}]$ & 3.429 & 3.5 & 4.667 & 5 & 5 & 4.715 & 4.369 & 4.083 \\[1ex]
\hline
$[0, \frac{1}{6}, \frac{1}{6}, 0, 0, 0, 0, \frac{1}{3}, \frac{1}{3}]$ & 3.39 & 3.5 & 3.5 & 5 & 5 & 4.205 & 4.295 & 3.5 \\[1ex]
\hline
$[0, \frac{1}{6}, \frac{1}{6}, 0, 0, 0, 0, \frac{2}{3}, 0]$ & 0.167 & 0.167 & 0.167 & 1.667 & 1.667 & 0.917 & 0.963 & 0.167 \\[1ex]
\hline
$[0, \frac{1}{3}, \frac{1}{6}, 0, 0, \frac{1}{6}, 0, \frac{1}{3}, 0]$ & 0.167 & 0.333 & 0.167 & 3.333 & 1.667 & 1.042 & 1.835 & 0.25 \\[1ex]
\hline
\end{tabular}
\caption{Comparison of cumulative Expected cost for different Policies}
\end{table*}

\section{Conclusion}
In conclusion, the optimization of communication within IoT networks, particularly those with numerous interconnected sensors, is paramount for ensuring efficient network operation. Collaborative decision-making in uncertain environments becomes indispensable in these networks, where simultaneous data transmission poses the risk of collisions and reduced efficiency.

Symmetric strategies, inspired by cooperative multi-agent scenarios, emerge as a promising solution. By deploying identical decision-making protocols among IoT devices, these strategies empower devices to optimize their transmission patterns. This optimization reduces data collisions, ultimately leading to improved network throughput.

The transformative potential of symmetric strategies in IoT networks cannot be understated. These strategies streamline data transmission, enhance network efficiency, and unlock the full potential of IoT networks. Our paper delves into the application of symmetric strategies to multi-access IoT sensor networks, providing valuable insights and empirical evidence on their profound impact.

Our overarching goal is to pave the way for improved coordination among IoT devices, ensuring efficient communication and a seamless future for IoT networks. As the IoT landscape continues to evolve, symmetric strategies stand as a beacon of promise for achieving these crucial objectives.


\section*{Acknowledgment}
The author would like to thank the Annenberg Fellowship and the grants from the Graduate School research award.



%

\bibliographystyle{ieeetr}
\bibliography{IEEEabrv, ref_learning}



\end{document}